\newtcbox{\mybox}[1][red]
  {on line, arc = 0pt, outer arc = 0pt,
    colback = #1!10!white, colframe = #1!50!black,
    boxsep = 0, left = 1pt, right = 1pt,
    boxrule = 0pt}
  \providecommand\BibTeX{{%
    \normalfont B\kern-0.5em{\scshape i\kern-0.25em b}\kern-0.8em\TeX}}}
\newcommand{\modify}[2]{\textcolor{black}{#2}}
\begin{document}

\title{A Syntax-Guided Edit Decoder for Neural Program Repair}



\begin{abstract}
  Automated Program Repair (APR) helps improve the efficiency of software development and maintenance. Recent APR techniques use deep learning, particularly the encoder-decoder architecture, to generate patches.
  Though existing DL-based APR approaches have proposed different encoder architectures, the decoder remains to be the standard one, which generates a sequence of tokens one by one to replace the faulty statement.
  This decoder has multiple limitations: 1) allowing to generate syntactically incorrect programs, 2) inefficiently representing small edits, and 3) not being able to generate project-specific identifiers.
  
  In this paper, we propose \techname, a syntax-guided edit decoder with placeholder generation. \techname is novel in multiple aspects: 1) \techname generates edits rather than modified code, allowing efficient representation of small edits; 2) \techname is syntax-guided, with the novel provider/decider architecture to ensure the syntactic correctness of the patched program and accurate generation;  3) \techname generates placeholders that could be instantiated as project-specific identifiers later.

  We conduct experiments to evaluate \techname on 395 bugs from \textit{Defects4J v1.2}, 420 additional bugs from \textit{Defects4J v2.0}, 297 bugs from \textit{IntroClassJava} and 40 bugs from \textit{QuixBugs}. Our results show that \techname repairs 51 bugs on Defects4J v1.2, which achieves 21.4\% (9 bugs) improvement over the previous state-of-the-art approach for single-hunk bugs (TBar). Importantly, to our knowledge, \textit{\techname is the first DL-based APR approach that has outperformed the traditional APR approaches on this benchmark}. Furthermore, \techname repairs 19 bugs on the additional bugs from Defects4J v2.0, which is 137.5\% (11 bugs) more than TBar  and 850\% (17 bugs) more than SimFix. \techname also achieves 775\% (31 bugs) and 30.8\% (4 bugs) improvement on IntroClassJava and QuixBugs over the baselines respectively. These results suggest that \techname has better generalizability than existing APR approaches.
\end{abstract}


\author{Qihao Zhu}
\affiliation{%
  \institution{Key Laboratory of HCST, MoE \\DCST, Peking University}
   \country{Beijing, China}
  }
\email{Zhuqh@pku.edu.cn}

\author{Zeyu Sun}
\affiliation{%
  \institution{Key Laboratory of HCST, MoE \\DCST, Peking University}
  \country{Beijing, China}
  }
\email{szy_@pku.edu.cn}

\author{Yuan-an Xiao}
\affiliation{%
  \institution{Key Laboratory of HCST, MoE \\DCST, Peking University}
   \country{Beijing, China}
  }
\email{xiaoyuanan@pku.edu.cn}

\author{Wenjie Zhang}
\affiliation{%
  \institution{Key Laboratory of HCST, MoE \\DCST, Peking University}
   \country{Beijing, China}
  }
\email{zhang_wen_jie@pku.edu.cn}

\author{Kang Yuan}
\affiliation{%
  \institution{Stony Brook University}
  \country{New York, US}
  }
\email{kang.yuan@stonybrook.edu}

\author{Yingfei Xiong}
\authornote{Corresponding author. \\HCST: High Confidence Software Technologies.}
\affiliation{%
  \institution{Key Laboratory of HCST, MoE \\DCST, Peking University}
   \country{Beijing, China}
  }
\email{xiongyf@pku.edu.cn}

\author{Lu Zhang}
\affiliation{%
  \institution{Key Laboratory of HCST, MoE \\DCST, Peking University}
  \country{Beijing, China}
  }
\email{zhanglucs@pku.edu.cn}
\begin{CCSXML}
<ccs2012>
<concept>
<concept_id>10011007.10011074.10011099.10011102.10011103</concept_id>
<concept_desc>Software and its engineering~Software testing and debugging</concept_desc>
<concept_significance>500</concept_significance>
</concept>
</ccs2012>
\end{CCSXML}
\ccsdesc[500]{Software and its engineering}
\ccsdesc[300]{Computing methodologies~Software testing and debugging}
\ccsdesc[300]{Computing methodologies~Neural networks}
\keywords{Automated program repair, Neural networks}
\newcommand{\techname}{Recoder\xspace}

\maketitle

\section{Introduction}
Automated program repair (APR) aims to reduce bug-fixing effort by generating patches to aid the developers. Due to the well-known problem of weak test suites~\cite{PatchPlausibility}, even if a patch passes all the tests, the patch still has a high probability of being incorrect. To overcome this problem, existing approaches have used different means to guide the patch generation. A typical way is to learn from existing software repositories, such as learning patterns from existing patches~\cite{6606626,2018Shaping,jiang2019inferring,liu2019tbar,long2017automatic,bader2019getafix,DBLP:conf/icse/RolimSDPGGSH17}, and using program code to guide the patch generation~\cite{xiong2017precise,long2016automatic,2018Shaping,ELIXIR,xiong2018learning}. 

Deep learning is known as a powerful machine learning approach. Recently, a series of research efforts have attempted to use deep learning (DL) techniques to learn from existing patches for program repair~\citep{8827954,DBLP:journals/corr/abs-1812-07170,9000077,dlfix}. A typical DL-based approach generates a new statement to replace the faulty statement located by a fault localization approach. Existing DL-based approaches are based on the encoder-decoder architecture~\cite{Bahdanau2015NeuralMT}: the encoder encodes the faulty statement as well as any necessary code context into a fixed-length internal representation, and the decoder generates a new statement from it. For example, \citet{DBLP:journals/corr/abs-1812-07170} and \citet{9000077} adopt an existing neural machine translation architecture, NMT, to generate the bug fix; SequenceR~\cite{8827954} uses a sequence-to-sequence neural model with a copy mechanism; DLFix~\citep{dlfix} further treats the faulty statement as an AST rather than a sequence of tokens, and encodes the context of the statement. 

However, despite multiple existing efforts, DL-based APR approaches have not yet outperformed traditional APR approaches. Since deep learning has outperformed traditional approaches in many domains, in this paper we aim to further improve the performance of DL-based APR to understand whether we could outperform traditional APR using a DL-based approach. We observe that, though existing DL-based APR approaches have proposed different encoder architectures for APR, the decoder architecture remains to be the standard one, generating a sequence of tokens one by one to replace the original faulty program fragment. The use of this standard decoder significantly limits the performance of DL-based APR. Here we highlight three main limitations.

{\bf Limitation 1: Including syntactically incorrect programs in the patch space}. The goal of the decoder is to locate a patch from a patch space. The smaller the patch space is, the easier the task is. However, viewing a patch as a sequence of tokens unnecessarily enlarges the patch space, making the decoding task difficult. In particular, this space representation does not consider the syntax of the target programming language and includes many syntactically incorrect statements, which can never form a correct patch. 

{\bf Limitation 2: Inefficient representation of small edits}.
Many patches only modify a small portion of a statement, and re-generating the whole statement leads to an unnecessarily large patch space. For example, let us consider the patch of defect Closure-14 in the Defects4J benchmark~\cite{defects4j}, as shown in Figure~\ref{fig:pattern-de1}. This patch only changes one token in the statement, but under existing representation, it is encoded as a sequence of length 13. The program space containing this patch would roughly contain $n^{13}$ elements, where $n$ is the total number of tokens. On the other hand, let us consider a patch space including only one-token change edits. To generate that patch, only selecting a token in the faulty statement and a new token for replacement is needed. This patch space contains only $mn$ elements, where $m$ is the number of tokens in the faulty statement. Therefore, the size of the patch space is significantly reduced.


{\bf Limitation 3: Not being able to generate project-specific identifiers}. Source code of programs often contains project-specific identifiers like variable names. Since it is impractical to include all possible identifiers in the patch space, existing DL-based APR approaches only generate identifiers that have frequently appeared in the training set. However, different projects have different sets of project-specific identifiers, and therefore only considering identifiers in the training set may exclude possible patches from the patch space. For example, Figure~\ref{fig:pattern-de2} shows the patch for defect Lang-57 in Defects4J. To generate this patch, we need to generate the identifier ``{\tt availableLocaleSet}'', which is a method name of the faulty class, and is unlikely to be included in the training set. As a result, existing DL-based approaches cannot generate patches like this. 

\smallskip
In this paper, we propose a novel DL-based APR approach, \techname, standing for \underline{\textbf{re}}pair de\underline{\textbf{coder}}. Similar to existing approaches, \techname is based on the encoder-decoder architecture. To address the limitations above, the decoder of \techname has following two novel techniques.

{\bf Novelty 1: Syntax-Guided Edit Decoding with Provider/Decider Architecture} (concerning limitation 1 \& 2).
To address limitation 2, the decoder component of \techname produces a sequence of edits rather than a new statement. Our edit decoder is based on the idea of the syntax-guided decoder in existing neural program generation approaches~\cite{treegen,DBLP:conf/acl/YinN17,DBLP:conf/acl/RabinovichSK17,DBLP:conf/aaai/SunZMXLZ19}. For an unexpanded non-terminal node in a partial AST, the decoder estimates the probability of each grammar rule to be used to expand the node. Based on this, the decoder selects the most probable sequence of rules to expand the start symbol into a full program using a search algorithm such as beam search. 
We observe that edits could also be described by a grammar. For example, the previous patch for defect Closure-14 could be described by the following grammar: 
\[\begin{array}{rcl}
    {\it Edit} &\rightarrow & {\it Insert} \mid {\it Modify} \mid \ldots \\
    {\it Modify} &\rightarrow & {\tt modify}({\it NodeID}, {\it NTS}) \\
\end{array}\]  
Here {\tt modify} represents replacing an AST subtree denoted by its root node ID ({\it NodeID}) in the faulty statement with a newly generated subtree ({\it NTS\footnote{``NTS'' stands for ``non-terminal symbol in an AST''.}}).
\begin{figure}
    \centering
    \includegraphics[width=\linewidth]{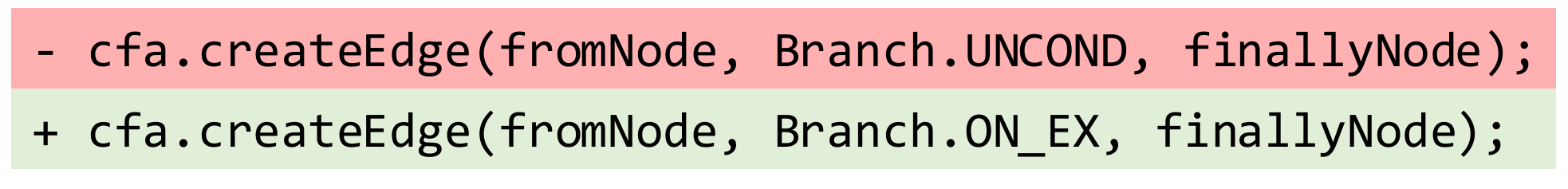}
    \vspace{-8mm}
    \caption{The Patch for Closure-14 in Defects4J }
    \vspace{-2mm}
    \label{fig:pattern-de1}
\end{figure}
\begin{figure}
    \centering
    \includegraphics[width=\linewidth]{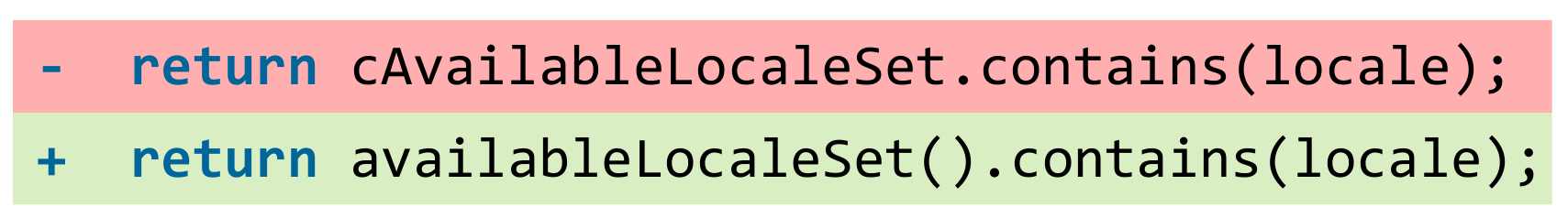}
    \vspace{-8mm}
    \caption{The Patch for Lang-57 in Defects4J}
    \vspace{-2mm}
    \label{fig:pattern-de2}
\end{figure}

However, directly applying the existing syntax-guided decoder to the grammar above would not form an effective program repair approach, because the choice of expanding different non-terminal nodes may need to be deduced along with different types of dependencies.
First, the expansion of some non-terminals depends on the local context, e.g., the choice of {\it NodeID} depends on the faulty statement, and the neural network needs to be aware of the local context to make a suitable choice. 
Second, to guarantee syntax correctness (limitation 1), dependency exists among the choices for expanding different non-terminal nodes, e.g., when {\it NodeID} expands to an ID pointing to a node with non-terminal {\it JavaExpr}, {\it NTS} should also expand to {\it JavaExpr} to ensure syntactic correctness. These choices cannot be effectively pre-defined, and thus the existing syntax-guided decoders, which only select among a set of pre-defined grammar rules, do not work here. 

\begin{figure}
    \centering
    \includegraphics[width=\linewidth]{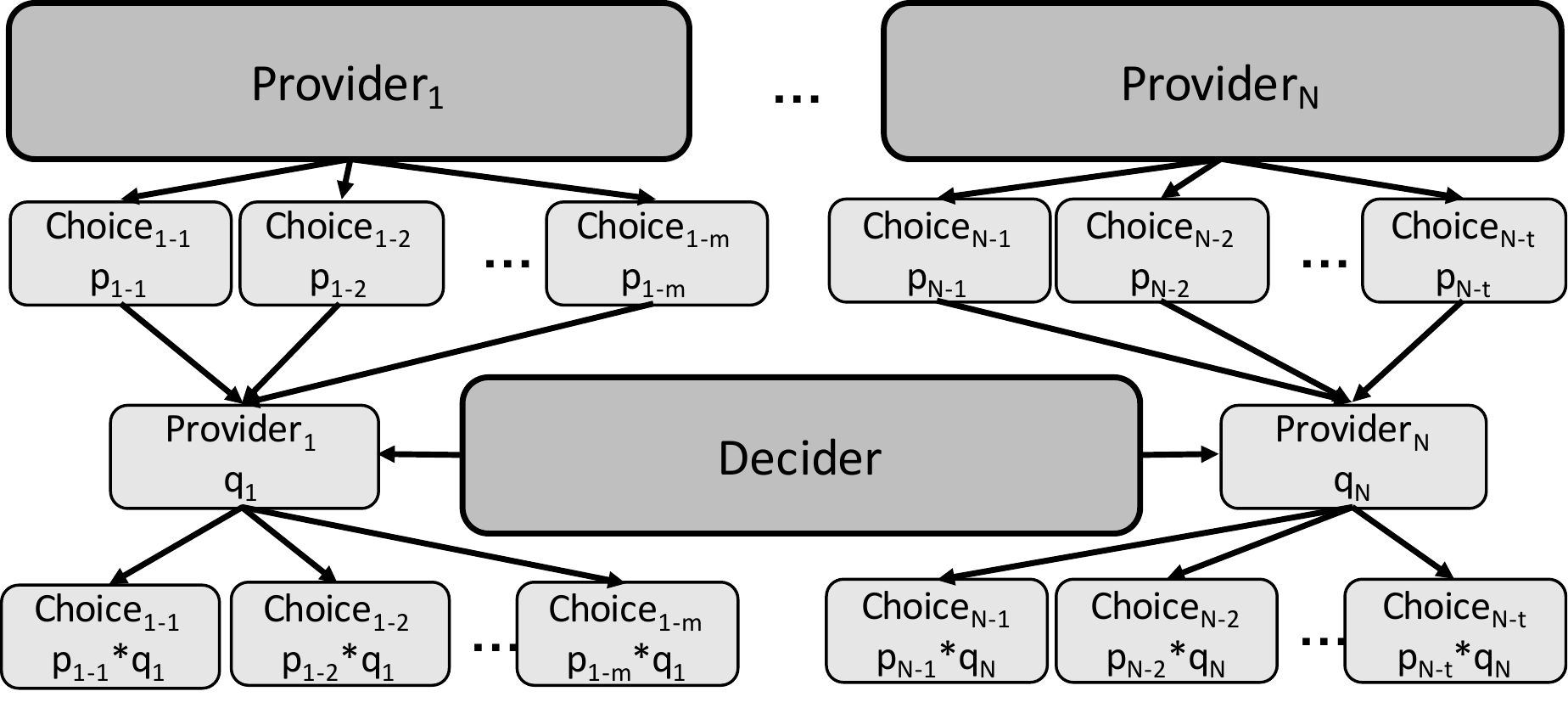}
    \vspace{-7mm}
    \caption{Provider/Decider Architecture}
    \vspace{-4mm}
    \label{fig:pd}
\end{figure}

To overcome these problems, \techname introduces a provider/decider architecture, as shown in Figure~\ref{fig:pd}. A provider is a neural component that provides a set of choices for expanding a non-terminal and estimates the probability $p_i$ of each choice. A basic provider is the rule predictor, which, similar to existing syntax-guided decoders, estimates the probability of each grammar rule to expand the node. Fixing the Closure-14 example needs another provider, namely the subtree locator, which estimates the probability of each subtree in the faulty statement to be replaced. On the other hand, the decider is a component that estimates the probability of $q_j$ using each provider. In this example, when expanding {\it Edit}, the probability of using the rule predictor is 1, and the probability of using the subtree locator is 0; when expanding {\it Modify}, the probability of using the rule predictor is 0 and the probability of using the subtree locator is 1 (the located subtree decides both the content of {\it NodeID} and the root symbol of {\it NTS}). Finally, the choices provided by all providers form the final list of choices, while the probability of each choice is the product of the probability predicted by its provider and the probability of the provider itself, i.e., $p_i*q_j$. 

In this example, for each non-terminal, we use the choices from only one provider, and thus the probabilities of providers are either 0 or 1. Later we will see that expanding some non-terminals requires comparing the choices of multiple providers, and the probabilities of providers could be a real number between 0 and 1.

{\bf Novelty 2: Placeholder Generation} (concerning limitation 3). 
To generate project-specific identifiers, a direct idea is to add another provider that selects an identifier from the local context. However, to implement such a provider, the neural component needs to access all of the name declarations within the current project. This is a difficult job, as the neural component could hardly encode all source code from the whole project. 

Instead of relying on the neural network to generate project-specific identifiers, in \techname the neural network generates placeholders for such identifiers, and these placeholders are instantiated with all feasible identifiers when applying the edits. A feasible identifier is an identifier compatible with constraints in the programming language, such as the type system. As for defect Lang-57 shown in Figure~\ref{fig:pattern-de2}, \techname first generates a placeholder for ``{\tt availableLocaleSet}'', and it will be replaced with all methods accessible in the local context that takes no arguments and returns an object with a member method ``{\tt contains}''. Each replacement forms a new patch. The key insight is that, when considering constraints in the programming language, the number of choices for replacing a placeholder with an identifier is small, and thus instantiating the placeholders with all possible choices is feasible.

To train the neural network to generate placeholders, we replace infrequent user-defined identifiers in the training set with placeholders. In this way, the neural network learns to generate placeholders for these identifiers.

\modify{}{Our experiment is conducted on four benchmarks: (1) 395 bugs from Defects4J v1.2 for comparison with existing approaches. (2) 420 additional bugs from  Defects4J v2.0, (3) 297 bugs from IntroClassJava, and (4) 40 bugs from QuixBugs to evaluate the generalizability of \techname. The results show that \techname correctly repairs 53 bugs on the first benchmark, which are 26.2\% (11 bugs) more than TBar~\cite{liu2019tbar} and 55.9\% (19 bugs) more than SimFix~\cite{2018Shaping}, two best-performing single-hunk APR approaches on Defects4J v1.2; \techname also correctly repairs 19 bugs on the second benchmark, which are 137.5\% (11 bugs) more than TBar and 850.0\% (17 bugs) more than SimFix . On IntroClassJava and QuixBugs, \techname repairs 35 bugs and 17 bugs respectively, which also achieves better performance than the existing APR tools that were evaluated on the two benchmarks. The results suggest that \techname has better performance and better generalizability than existing approaches. To our knowledge, \emph{this is the first DL-based APR approach that has outperformed traditional APR approaches}. 
}
To summarize, this paper makes the following contributions:
\begin{itemize}
    \item We propose a syntax-guided edit decoder for APR with a provider/decider architecture to accurately predict the edits and ensure that the edited program is syntactically correct and uses placeholders to generate patches with project-specific identifiers.
    \item We design \techname, a neural APR approach based on the decoder architecture described above.
    \item We evaluate \techname on 395 bugs from Defects4J v1.2 and 420 additional bugs from Defects4J v2.0. The results show that \techname significantly outperforms state-of-the-art approaches for single-hunk bugs in terms of both repair performance and generalizability.
\end{itemize}

 

\tcbset{colframe = white,left=0mm,colback=white}
\section{Edits}
We introduce the syntax and semantics of edits and their relations to providers in this section. The neural architecture to generate edits and implement providers will be discussed in the next section.

\subsection {Syntax and Semantics of Edits}

\begin{figure}
\begin{tcolorbox}
$\begin{array}{rlrl}
1.\hspace{-2mm}&{\it Edits} &\rightarrow & {\it Edit}; {\it Edits} \mid {\tt end}\\
2.\hspace{-2mm}&{\it Edit} &\rightarrow & {\it Insert} \mid {\it Modify}\\
3.\hspace{-2mm}&{\it Insert} &\rightarrow & {\tt insert}({\it \langle HLStatement\rangle}) \\
4.\hspace{-2mm}&{\it Modify} &\rightarrow & {\tt modify}( \\
& & &\  \langle\textit{ID of an AST Node with a NTS}\rangle, \\
& & &\  \langle\textit{the same NTS as the above NTS}\rangle) \\
5.\hspace{-2mm}&{\langle \textit{Any NTS in HL}\rangle }\hspace{-2mm}&\rightarrow \\
&\multicolumn{3}{r} {{\tt copy}(\langle \textit{ID of an\ AST\ Node with the same NTS}\rangle) } \\
&&\mid&{\it \langle The\ original\ production\ rules\ in\ HL\rangle} \\
6.\hspace{-2mm}&{\it \langle HLIdentifier\rangle} \hspace{-2mm}&\rightarrow &{\tt placeholder}\\
&&\mid&{\it \langle Identifiers\ in\ the\ training\ set\rangle} \\
\end{array}$
\end{tcolorbox}    
\parbox{\columnwidth}{\footnotesize ``HL'' stands for ``host language''. ``NTS'' stands for ``non-terminal symbol''. ``${\it \langle HLStatement\rangle}$'' is the non-terminal in the grammar of the host language representing a statement. ``${\it \langle HLIdentifier\rangle}$'' is the non-terminal in the grammar of the host language representing an identifier. }
\vspace{-3mm}
\caption{The Syntax of Edits}
    \vspace{-2mm}
    \label{fig:syntax}
\end{figure}

Figure \ref{fig:syntax} shows the syntax of edits. 
Note that our approach is not specific to a particular programming language and can be applied to any programming language (called the \emph{host language}) that has a concept similar to the statement. In particular, it is required that when a statement is present in a program, a sequence of statements can also be present at the same location. In other words, inserting a statement before any existing statement would still result in a syntactically correct program. 
To ensure syntactic correctness of the edited program, the syntax of edits depends on the syntax of the host language. In Figure~\ref{fig:syntax}, ``HL'' refers to the host programming language our approach applies to. \modify{Link rules to figure 4}{In the following we explain each rule in Figure~\ref{fig:syntax} in order.} 

As defined by Rule 1 and Rule 2, an {\it Edits} is a sequence of {\it Edit} ended by a special symbol {\tt end}. An {\it Edit} can be one of two edit operations, {\tt insert} and {\tt modify}.

Rule 3 defines the syntax of {\tt insert} operation. The \texttt{insert} operation inserts a newly generated statement before the faulty statement. As shown in Rule 3, the {\tt insert} operation has one parameter, which is the statement to insert. Here $\langle HLStatement\rangle$ refers to the non-terminal in the grammar of the host language that represents a statement. This non-terminal could be expanded into a full statement, or a copy operation that copies a statement from the original program, or a mixture of both. This behavior will be explained later in Rule 5.

Rule 4 defines the syntax of {\tt modify} operation. The \texttt{modify} operation replaces an AST subtree in the faulty statement with a new AST subtree. The {\tt modify} operation has two parameters. The first parameter is the ID of the root node from the AST subtree to be replaced. The ID of a node is defined as the order of a node in the pre-order traversal sequence, e.g., the 6th visited node has the ID of 6. 
The second parameter is an AST subtree whose root node has the same symbol, i.e., the root node cannot be changed. In this way, the replacement ensures syntactic correctness. To ensure that there is an actual change, the subtree to be replaced should have more than one node, i.e., the root node should have a non-terminal symbol. 

For both {\tt insert} and {\tt modify}, we need to generate a new AST subtree. It is noticeable that in many patches, the AST subtree being inserted or modified is not completely original; some of its subtrees may be copied from other parts of the program. Taking advantage of this property, {\tt copy} operation is introduced to further reduce the patch space. Rule 5 defines the syntax of this operation. It is a meta-rule applied to any non-terminal symbol of the host language. For any non-terminal symbol in the host language, we add a production rule that expands it into a {\tt copy} operation. The original production rules for this non-terminal are also kept, so that when generating the edits, the neural network could choose to directly generate a new subtree or to copy one.

The {\tt copy} operation has one parameter, which identifies the root node of the AST subtree to be copied. The AST subtree can be selected from the faulty statement or its context. 
In our current implementation, we allow copying from the method surrounding the faulty statement. Also, to ensure syntactic correctness, the root node of the subtree to be copied should have the same non-terminal symbol as the symbol being extended. 


Finally, Rule 6 introduces {\tt placeholder} into the grammar. Normally, the grammar of a programming language uses a terminal symbol to represent an identifier. To enable the neural network to generate concrete identifiers as well as the {\tt placeholder}, we change identifier nodes into non-terminals, which expand to either {\tt placeholder} or one of the frequent identifiers in the training set. In our current implementation, an identifier is considered frequent if it appears more than 100 times in the training set. 

When applying the edits, the {\tt placeholder} tokens are replaced with feasible identifiers within the context. We first collect all identifiers in the current projects by performing a lexical analysis and collect the tokens whose lexical type is $\tt \langle HLIdentifier\rangle$, the symbol representing an identifier in the host language. Then we filter identifiers based on the following criteria: (1) the identifier is accessible from the local context, and (2) replacing the placeholder with the identifier would not lead to type errors. The remaining identifiers are feasible identifiers.

Figure~\ref{fig:insert} and Figure~\ref{fig:modify} show two example patches represented by edits. The patch in Figure~\ref{fig:insert} inserts an {\tt if} statement, and the conditional expression contains a method invocation that is copied from the faulty statement. The patch in Figure~\ref{fig:modify} replaces the qualifier of a method invocation with another invocation, where the name of the method is a placeholder to be instantiated later.

\begin{theorem}
    The edited programs are syntactically correct. 
\end{theorem}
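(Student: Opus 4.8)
The plan is to reduce the claim to a single structural invariant about the AST subtrees the decoder generates, and then lift that invariant through the two edit operations and finally through an arbitrary sequence of edits. The central observation is that the grammar of the host language is context-free, so whether an AST is a valid derivation depends, at every position, only on the nonterminal symbol sitting at that position. Consequently, replacing the subtree rooted at a node carrying a nonterminal symbol $X$ by \emph{any} other valid derivation of $X$ leaves the surrounding tree a valid derivation. I would make this \emph{subtree-substitution lemma} the backbone of the whole argument.

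First I would prove the invariant: every subtree that the edit grammar produces for a host-language nonterminal $X$ (via Rule~5, together with Rule~6) is, once all \texttt{copy} and \texttt{placeholder} occurrences are resolved, a valid derivation of $X$ in the host grammar. This goes by structural induction on the edit-grammar derivation of the subtree. For the \texttt{copy} alternative of Rule~5, the copied subtree is taken from the surrounding method, which is part of an already syntactically correct program, and \texttt{copy} is constrained to select a node whose symbol is exactly $X$; hence the copied subtree is, by assumption on the original program, a valid derivation of $X$. For the ``original production rules'' alternative, the root applies a genuine host-language rule whose children are nonterminals, each of whose generated subtrees is valid by the induction hypothesis. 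The only terminals needing care are identifiers (Rule~6): a frequent identifier is literally a valid $\langle HLIdentifier\rangle$ token, and a \texttt{placeholder} is, during edit application, instantiated only with \emph{feasible} identifiers, which are gathered by lexical analysis precisely as tokens of lexical type $\langle HLIdentifier\rangle$ and so are themselves valid identifier tokens. This closes the induction and establishes the invariant.

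Next I would discharge each edit operation using the invariant and the lemma. For \texttt{modify} (Rule~4), the operation replaces the subtree rooted at the chosen node -- whose symbol is some NTS $X$ -- with a freshly generated subtree whose root carries the \emph{same} symbol $X$ and which, by the invariant, is a valid derivation of $X$; the substitution lemma then yields a syntactically correct tree. For \texttt{insert} (Rule~3), the inserted object is a valid $\langle HLStatement\rangle$ subtree by the invariant, and by the standing assumption on the host language -- that a statement may appear wherever a statement already does, so inserting a statement before an existing one preserves correctness -- the result remains syntactically correct.

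Finally I would compose the operations: by Rules~1 and~2 an \textit{Edits} is a finite sequence of \texttt{insert}/\texttt{modify} operations, so a straightforward induction on the length of this sequence -- each step preserving correctness by the previous paragraph, starting from the original (correct) program -- gives the theorem. The main obstacle I anticipate is not any single case but the precise formalization underlying the substitution lemma: one must pin down that the host grammar is context-free and that the AST-level notion of ``validity'' coincides exactly with derivability, so that local subtree replacement with a matching root symbol is sound. A secondary subtlety is the \texttt{placeholder} step, which \techname resolves in a separate application phase rather than in the decoder; I would handle this by folding instantiation into the production of the final subtree and arguing, as above, that feasible identifiers are syntactically legal identifier tokens, so that the invariant persists for the fully instantiated program.
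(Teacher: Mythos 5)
Your proposal is correct and takes essentially the same route as the paper: the paper's proof is a brief structural induction over the edit grammar resting on the same four facts you establish (the host-language insertion requirement, \texttt{modify} preserving the root symbol, new subtrees built from host grammar rules or copied with a matching root symbol, and \texttt{placeholder}s instantiated only with tokens of lexical type $\langle HLIdentifier\rangle$). Your version merely makes explicit what the paper leaves implicit---the context-freeness-based subtree-substitution lemma and the induction over the length of the edit sequence---so it is a more rigorous elaboration of the same argument rather than a different one.
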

\begin{proof}
It is easy to see that the theorem holds by structural induction on the grammar of the edits. First, the requirement on the host programming language ensures that inserting a statement before another statement is syntactical correct. Second, when replacing a subtree with {\tt modify}, the root symbol of the subtree remains unchanged. Third, the new subtree in {\tt insert} and {\tt modify} is generated by either using the grammar rules of the host language, or copying a subtree with the same root symbol. Finally, instantiating a {\tt placeholder} ensures syntactic correctness because we only replace a {\tt placeholder} with a token whose lexical type is $\tt \langle HLIdentifier\rangle$.
\end{proof}


\begin{figure}
    \centering
    \includegraphics[width=\linewidth]{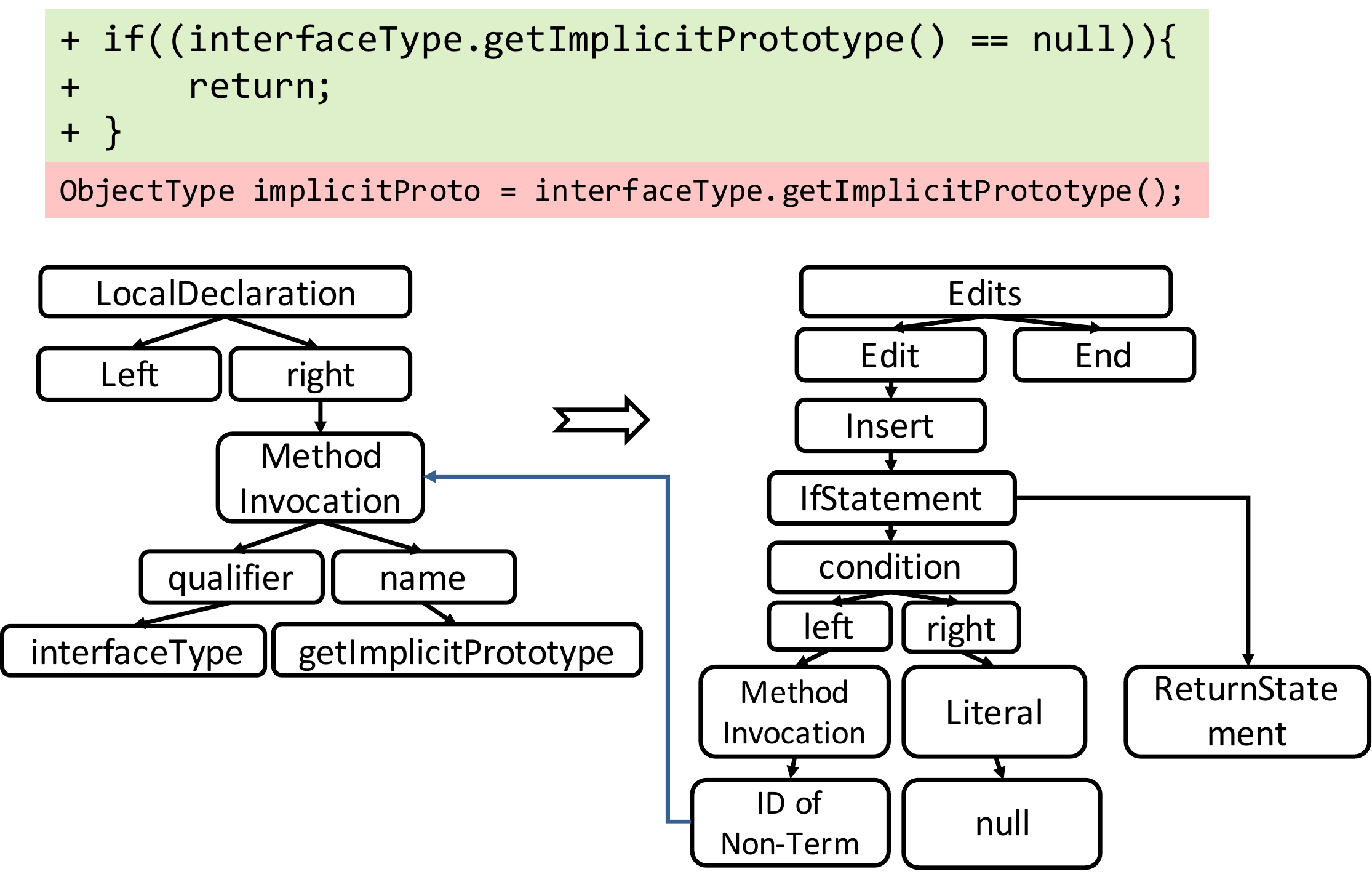}
    \vspace{-6mm}
    \caption{Example of Insert Operation (Closure-2)}
    \vspace{-2mm}
    \label{fig:insert}
\end{figure}

\begin{figure}
    \centering
    \includegraphics[width=\linewidth]{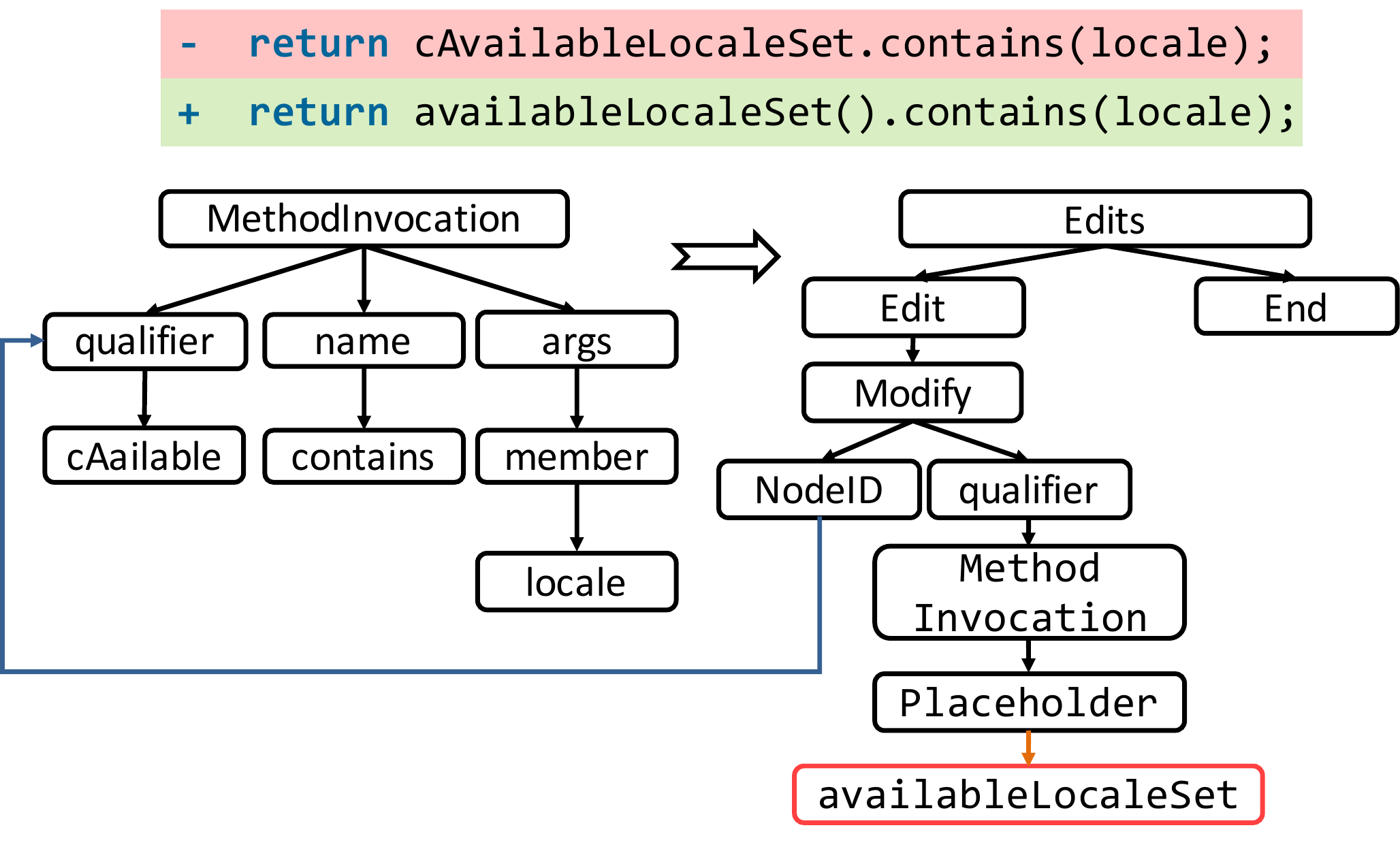}                            \vspace{-8mm}

    \caption{Example of Modify Operation (Lang-57)}
    \label{fig:modify}
\end{figure}
\begin{figure*}
    \centering
    \includegraphics[width=\linewidth]{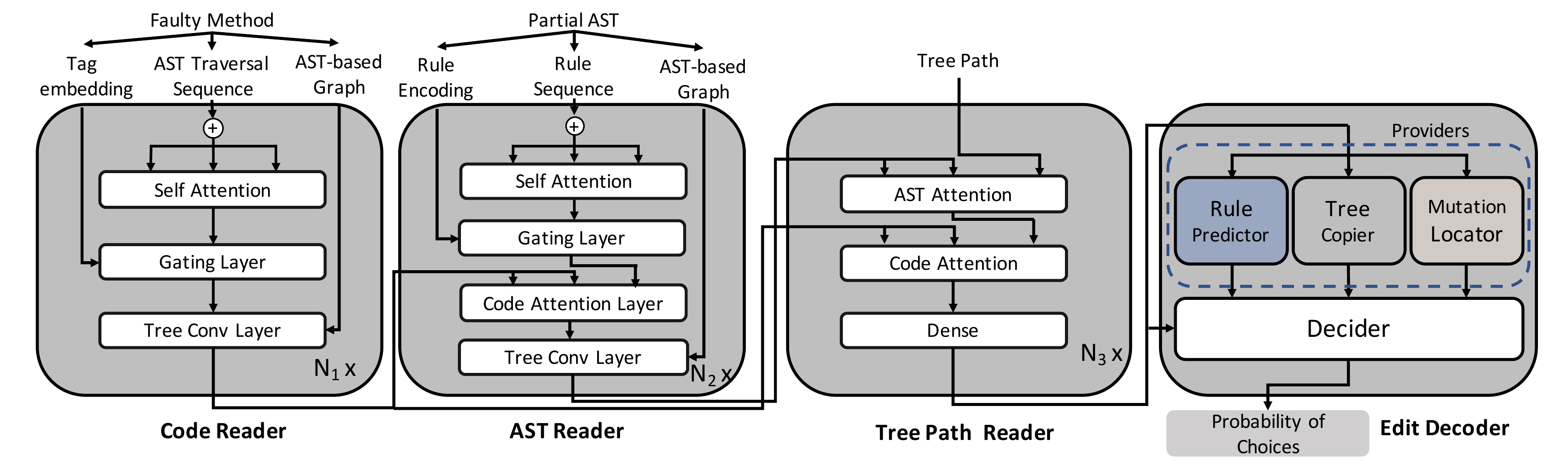}
                            \vspace{-9mm}
    \caption{\modify{Overview of Model}{Overview of \techname}}\vspace{-2mm}
    \label{fig:overview-model}
\end{figure*}
\subsection{Generation of Edits}

\begin{table}[!t]
    \caption{Providers for non-terminals}
    \vspace{-2mm}
    \centering
    \small
    \begin{tabular}{l|l}
        \toprule
        {\textbf{ Component}} & \textbf{Associated Non-terminals}   \\
        \midrule
        Rule Predictor & \it Edits, Edit, Insert, $\langle \it HLIdentifier \rangle$, $\langle \textit{Any NTS in HL} \rangle$ \\
        \midrule
        Subtree Locator & \it Modify  \\
        \midrule
        Tree Copier & $\langle \textit{Any NTS in HL} \rangle$  \\
        \bottomrule
    \end{tabular}
    \label{tab:type}
\end{table}

Since the choice of expanding a non-terminal may depend on the local context or a previous choice, we use providers to provide choices and estimate their probabilities. Our current implementation has three types of providers. Table~\ref{tab:type} shows these providers and their associated non-terminals.

For non-terminals {\it Edits, Edit, Insert} and $\langle \it HLIdentifier \rangle$, the rule predictor is responsible for providing choices and estimates the probability of each production rule. The rule predictor consists of a neural component and a logic component. After the neural component assigns the probability for each production rule, the logic component resets the probability of rules whose left-hand side is not the corresponding non-terminal to zero and normalizes the remaining probabilities.

For {\it Modify}, the subtree locator is responsible for providing the choices. The subtree locator estimates the probability of each AST subtree with a size larger than 1 in the faulty statement. The choice of a subtree $t$ means that we should expand {\it Modify} into {\tt modify({\it ID}, {\it NTS})} where {\it ID} is the root ID of $t$ and {\it NTS} is the root symbol of $t$.

For any non-terminal in the grammar of the host language (note that $\langle \it HLIdentifier \rangle$ is a terminal symbol in the host language), both the rule predictor and the tree copier are responsible to provide the choices. The tree copier estimates the probabilities of each AST subtree with a size larger than 1 in the method surrounding the faulty statement. The choice of a subtree $t$ means that we should expand the non-terminal into {\tt copy({\it ID})}, where {\it ID} is the root ID of $t$. Similar to the rule predictor, the tree copier employs a logic component after the neural component to reset the probabilities of subtrees whose root symbols are different from the non-terminal symbol being expanded.

Finally, the decider assigns a probability to each provider. The decider also includes a similar logic component, which resets the probability of a provider to zero if that provider is not responsible for the current non-terminal symbol. For example, if the symbol being expanded is \textit{Modify}, the decider resets the probability of rule predictor and tree copier to zero.

\vspace{-3mm}
\section{Model Architecture}
The design of our model is based on the state-of-the-art syntax-guided code generation model, TreeGen~\cite{treegen}. It is a tree-based Transformer~\cite{transformer} that takes a natural language description as input and produces a program as output. Since our approach takes a faulty statement and its context as input and produces edits as output, we replace the components in TreeGen for encoding natural language description and decoding the program.

Figure~\ref{fig:overview-model} shows an overview of our model. The model performs one step in the edit generation process, which is to predict probabilities of choices for expanding a non-terminal node. Beam search is used to find the best combination of choices for generating the complete edits. The model consists of four main components: 
\begin{itemize}
    \item The {\bf code reader} that encodes the faulty statement and its context.
    \item The {\bf AST reader} that encodes the partial AST of the edits that have been generated.
    \item The {\bf tree path reader} that encodes a path from the root node to a non-terminal node which should be expanded.
    \item The {\bf edit decoder} that takes the encoded information from the previous three components and produces a probability of each choice for expanding the non-terminal node.
\end{itemize}
Among them, the AST reader and the tree path reader are derived from TreeGen, where the code reader and the edit decoder are newly introduced in this paper. In this section, we focus on describing the latter two components in detail. 

\subsection{Code Reader}
The code reader component encodes the faulty statement and the method surrounding the faulty statement as its context, \modify{}{where the faulty statement is localized by a fault localization technique}. It uses the following three inputs. \textbf{(1) AST traversal sequence.} This is a sequence of tokens following the pre-order traversal of the AST, $\bm{c}_1, \bm{c}_2, \cdots{}, \bm{c}_L$, where $\bm{c}_i$ is the token encoding vector of the $i$th node embedded via word embedding~\cite{Mikolov2013EfficientEO}. \textbf{(2) Tag embedding.} This is a sequence of tags following the same pre-order traversal of the AST, where each tag denotes which of the following cases the corresponding node belongs to: 1. in the faulty statement, 2. in the statement before the faulty statement, 3. in the statement after the faulty statement, or 4. in other statements. Each tag is embedded via an embedding-lookup table. We denote the tag embedding as $\bm{t}_1,\bm{t}_2,\cdots{},\bm{t}_L$. \textbf{(3) AST-based Graph.} Considering that the former two inputs do not capture the neighbor relations between AST nodes, in order to capture such information, we treat an AST as a directional graph where the nodes are AST nodes and the edges link a node to each of its children and its left sibling, as shown in Figure~\ref{fig:ast}(b). This graph is embedded as an adjacent matrix.



The code reader uses three sub-layers to encode the three inputs above, as discussed in the following sections.

\subsubsection{Self-Attention}
The self-attention sub-layer encodes the AST traversal sequence, following the Transformer~\cite{transformer} architecture to capture the long dependency information in the AST. 

Given the embedding of the input AST traversal sequence, we use position embedding to represent positional information of the AST token. The input vectors are denoted as $\bm{c}_1, \bm{c}_2, \cdots{}, \bm{c}_L$, and the position embedding of $i$th token is computed as
        \begin{align}
        \vspace{-2mm}
                 p_{(i,2j)} &= 
                    \sin (pos/(10000^{2j/d} )) \\ 
                     p_{(i,2j + 1)} &= \cos (pos/(10000^{2j/d} )) 
        \vspace{-3mm}
        \end{align}
where $pos = i + step$, $j$ denotes the element of the input vector and $step$ denotes the embedding size. After we get the vector of each position, it is directly added to the corresponding input vector, where $\bm{e}_i = \bm{c}_i + \bm{p}_i$.

Then, we adopt multi-head attention layer to capture non-linear features. Following the definition of \citet{transformer}, we divide the attention mechanism into $H$ heads. Each head represents an individual attention layer to extract unique information. The single attention layer maps the query $Q$, the key $K$, and the value $V$ into a weighted-sum output. The computation of the $j$th head layer can be represented as
\begin{equation}
    \vspace{-2mm}
    head_j = \text{softmax}(\frac{QK^T}{\sqrt{d_k}})V\label{eq:att2}
    \vspace{-1mm}
\end{equation}
where $d_k = d/H$ denotes the length of each extracted feature vector, and $Q$, $K$ and $V$ are computed by a fully-connected layer from $Q$, $K$, $V$. In the encoder, vectors $Q$, $K$ and $V$ are all the outputs of the position embedding layer $\bm{e}_1, \bm{e}_2, \cdots, \bm{e}_L$. The outputs of these heads are further joint together with a fully-connected layer, which is computed by
\begin{equation}
    \vspace{-1mm}
    Out = [head_1;\cdots;head_H]\cdot W_h \label{eq:att3}
    \vspace{-1mm}
\end{equation}
where $W_h$ denotes the weight of the fully-connected layer and $Out$ denotes the outputs $\bm{a_1}, \bm{a_2}, \cdots{}, \bm{a_L}$ of the self-attention sub-layer. 

\subsubsection{{Gating Layer}}
This sub-layer takes the outputs of the previous layer and the tag embedding as input. 
Gating mechanism, as defined in TreeGen~\cite{treegen}, is used in this layer. It takes three vectors named $\bm{q}, \bm{c}_1, \bm{c}_2$ as input and aims to corporate $c_1$ with $c_2$ based on $\bm{q}$. The computation of gating mechanism can be represented as 
\begin{align}
    \alpha_i^{c_1} = \exp(\bm{q}_i^T\bm{k}_i^{c_1})&/\sqrt{d_k}\\
\alpha_i^{c_2} = \exp(\bm{q}_i^T\bm{k}_i^{c_2})&/\sqrt{d_k}\\
    \bm{h}_i = (\alpha_i^{c_1}\bm{v}_i^{c_1} + \alpha_i^{c_2}\bm{v}_i^{c_2})&/(\alpha_i^{c_1} + \alpha_i^{c_2})
    \label{eq:gate}
\end{align}
where $d_k = d / H$ is a normalization factor, $H$ denotes the number of heads, and $d$ denotes the hidden size; $\bm{q}_i$ is computed by a fully-connected layer over the control vector $\bm{q}_i$; $\bm{k}_i^{c_1}, \bm{v}_i^{c_1}$ is computed by another fully-connected layer over vector $\bm{c}_1$; $\bm{k}_i^{c_2}$ and $\bm{v}_i^{c_2}$ are also computed by the same layer with different parameters over the vector $\bm{c}_2$. 

In our model, we treat the outputs of the self-attention sub-layer $\bm{a}_1, \bm{a}_2, \cdots{}, \bm{a}_L$ as $\bm{q}$ and $\bm{c}_1$, and the tag embedding $\bm{t}_1, \bm{t}_2, \cdots{}, \bm{t}_L$ as $\bm{c}_2$. Thus, embedding of the $i$th AST node of the gating-layer can be represented as $\bm{u}_i = \text{Gating}(\bm{a_i}, \bm{a_i}, \bm{t}_i)$. 
 
 \begin{figure}
    \centering
    \includegraphics[width=.95\linewidth]{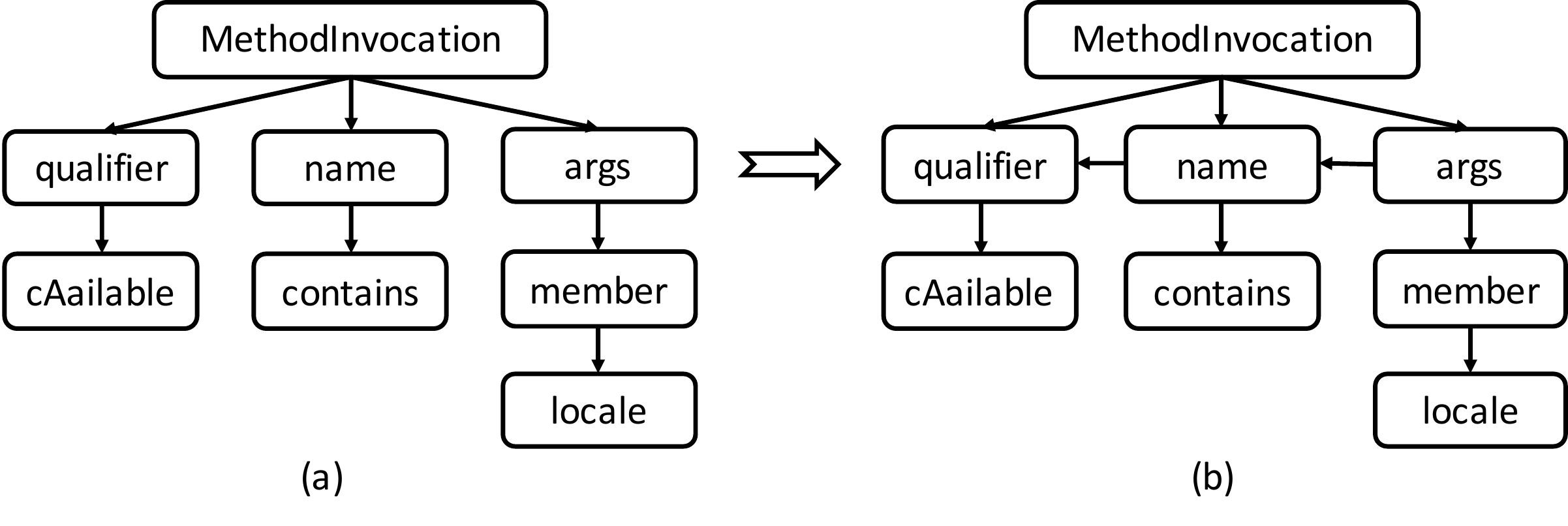}
                            \vspace{-5mm}
    \caption{Example of AST-based Graph}
    \vspace{-2mm}
    \label{fig:ast}
\end{figure}

\subsubsection{{Tree Conv Layer}}
This sub-layer takes the output $\bm{u}_i$ of the previous layer and the AST-based graph $G$ (represented as an adjacency matrix) as input.
We adopt a GNN~\cite{gnn08, Zhang_2020} layer to process the inputs, and the encoding of the neighbors $r_i$ is computed as 
\begin{equation}
    \bm{g}_i = W_g \sum_{r^j \in G}A^{n}_{r^i r^j}\bm{u}^j
    \vspace{-1mm}
\end{equation}
where $W_g$ is the weight of a fully-connected layer and $\hat{A}$ is a normalized adjacency matrix of $G$. The computation of the normal operation proposed by \citeauthor{kipf2016semi}~\cite{kipf2016semi} is represented as $\hat{A} = S_1^{-1/2} A S_2^{-1/2}$,
where $A$ is the adjacency matrix of $G$, and $S_1, S_2$ are the diagonal matrices with a summation of $A$ in columns and rows. Then, the encoding of the neighbors is directly added to the input vector.

In summary, the code reader has $N_1$ blocks of these three sub-layers, and yields the features of the input AST, $\bm{t_1}, \bm{t_2},\cdots,\bm{t_L}$, which would be used for the AST reader and the tree path reader.
\subsection{AST Reader}
\modify{Description of TreeGen}{
The AST reader encodes the partial generated AST of the edit, which has the same structure as the one in TreeGen~\cite{treegen}. This component takes three inputs derived from the partial generated AST as code reader. The rule sequence is represented as real-value vectors and then is fed into a self-attention layer. We then integrate the output of the self-attention layer with the rule encoding via a gating layer as Equation~\ref{eq:gate}. We also adopt a multi-head attention layer over the outputs of the code reader and the gating layer like the decoder-encoder attention in Transformer. Finally, we use a tree convolutional layer like the code reader to extract the structural information. More details of this component can be found in the publication of TreeGen~\cite{treegen}.}
\subsection{Tree Path Reader}
\modify{Description of TreeGen}{
The tree path reader encodes the information of the non-terminal node to be expanded and is the same as the one in TreeGen~\cite{treegen}. This component represents the non-terminal node as a path from the root to the node to be expanded and transforms the nodes in this path into real-value vectors. As shown in Figure~\ref{fig:overview-model}, these vectors are fed into two attention layers like Equation~\ref{eq:att3}. Finally, a set of two fully-connected layers, where the first layer has a $GELU$~\cite{2016Bridging} activation function, are followed to extract features for edit decoder. More details of this component can be found in the publication of TreeGen~\cite{treegen}. We denote the output of the tree path reader as $\bm{d}_1, \bm{d}_2,\cdots{}, \bm{d}_T$.}
\subsection{Edit Decoder}
The edit decoder takes the output of the tree path reader, $\bm{d}_1, \bm{d}_2,\cdots{}, \bm{d}_T$ with length $T$, as input. These vectors are produced by the tree path reader and contain the encoded information from all the inputs: the faulty statement with its surrounding method, the partial AST generated so far, and the tree path denoting the node to be expanded.

\subsubsection{Provider}
As mentioned before, there are currently three types of providers: rule predictor, tree copier, and subtree locator. These providers take the vector $\bm{d}_1, \bm{d}_2,\cdots{}, \bm{d}_T$ as input and output the probability of choices for different non-terminals.
\paragraph{\textbf{Rule Predictor}}
 The rule predictor estimates the probability of each production rule in the grammar of edits. The neural component of this decider consists of a fully-connected layer. The output of the fully-connected layer is denoted as $\bm{s}_1, \bm{s}_2, \cdots{}, \bm{s}_T$. Then, these vectors are normalized via softmax, which computes the normalized vectors $\bm p_1^r, \bm p_2^r,\cdots{}, \bm p_T^r$ by
\begin{equation}
    \bm p^r_k(m) = \frac{\exp\{\bm{s}^m_k\}}{\sum_{j=1}^{N_r}\exp\{\bm{s}_k^j\}}
    \label{con:soft}
\end{equation}
where $N_r$ denotes the number of production rules in the grammar of edits, and $m$ denotes the $m$th dimension of the vector $\bm p^r_k$ (i.e., the production rule with ID $m$). 
In particular, invalid rules whose left-hand side is not the corresponding non-terminal are not allowed in our approach. For these rules, the logic component resets the output of the fully-connected layer to $-\infty$. Thus, the probability of invalid rules will be zero after softmax normalization. 


\paragraph{\textbf{Tree Copier}}

This provider is designed for any non-terminal symbol in the grammar of edits to choose a subtree in the local context. The neural component is based on a pointer network~\cite{pointer}. The computation can be represented as 
\begin{equation}
    \bm{\theta}_i = \bm{v}^T\text{tanh}({W}_1\bm{d}_i + {W}_2\bm{t})
    \label{con:1}
\end{equation}
where $\bm{t}$ denotes the output of the code reader, and $\bm{v}, {W}_1, {W}_2$ denote the trainable parameters. The logic component also resets $\bm{\theta}$ to $-\infty$ if the root symbol of the corresponding subtree is different from the symbol being expanded. These vectors are then normalized via softmax as Equation~\ref{con:1}. We denote the normalized vector as $\bm p_1^t, \bm p_2^t,\cdots{}, \bm p_T^t$.
\paragraph{\textbf{Subtree Locator}}
This component outputs an ID of the subtree in the faulty statement for not-terminal symbol, \textit{Modify}, in the grammar of edits. The computation of this component is the same as the tree copier. We denote the output vector of this provider as $\bm p_1^s, \bm p_2^s,\cdots{}, \bm p_T^s$ 

\subsubsection{Decider}
For these three providers, the decider estimates the probability of using each provider. The neural component also takes the output of the tree path reader, $\bm{d}_1, \bm{d}_2,\cdots{}, \bm{d}_T$, as input, and produces the probability of using each provider as output. The computation can be represented as $\bm \lambda_i = {W}\bm{d}_i + \bm{b}$, where ${W}$ and $\bm{b}$ denote the parameters of a fully-connected layer. The logic component resets $\bm \lambda$ to $-\infty$ if the corresponding provider is not responsible for the symbol being expanded following Table~\ref{tab:type}. Then, the vectors are normalized via softmax as Equation \ref{con:soft}. We denote the normalized vectors as $\bm \lambda_1, \bm \lambda_2,\cdots{},\bm \lambda_T$. The final probability of each choice can be computed as 
\begin{equation}
    \bm o_i = [\bm \lambda_i^r \bm p_i^r;\bm \lambda_i^t \bm p_i^t;\bm \lambda_i^s \bm p_i^s]
\end{equation}
where $\bm o_i$ will be the probability vector of the next production rule at $i$th step during patch generation.

\subsection{Training and Inference}
During training, the model is optimized by maximizing the negative log-likelihood of the oracle edit sequence and \modify{Explanation}{we do not use the logic component in the providers and decider. Here we would like Recoder to learn the distribution of the rules handled by the logic component. If the logic component is present at training, Recoder would not be trained for a large portion of rules. During inference, these unseen rules would distort the distribution of output, making Recoder fail to distinguish the part of rules that it is supposed to distinguish.}

When generating edits, inference starts with the rule $start: \textit{start} \longrightarrow \textit{Edits}$, expanding a special symbol \texttt{start} to \texttt{Edits}. The recursive prediction terminates if every leaf node in the predicted AST is a terminal. We use beam search with a size of 100 to generate multiple edits. 

Generated edits may contain placeholders. Though the number of choices for a single placeholder is small, the combination of multiple placeholders may be large. Therefore, we discard patches containing more than one \texttt{placeholder} symbol during beam search.

\subsection{Patch Generation and Validation}
Patches are generated according to the result of the fault localization technique.
In our approach, the model described above is invoked for each suspicious faulty statement according to the result of fault localization. For each statement, we generate 100 valid patch candidates via beam search: when beam search generates a valid patch, we remove it from the search set and continue to search for the next patch until 100 candidates are generated in total for that statement. 
After patches are generated, the final step is to validate them via the test suite written by developers. The validation step filters out patches that do not compile or fail a test case. 
All generated patches are validated until a plausible patch (a patch that passes all test cases) is found.

\section{Experiment Setup}
\modify{}{We have implemented \techname for the Java programming language. In this and the next sections we report our experiments on repairing Java bugs.}

\subsection{Research Questions}
Our evaluation aims to answer the following research questions:

\noindent
    \textbf{RQ1:} \textbf{What is the performance of \techname?}

To answer this question, we evaluated our approach on the widely used APR benchmark, \textit{Defects4J v1.2}, and compared it with traditional and DL-based APR tools.

\noindent
    \textbf{RQ2:} \textbf{What is the contribution of each component in \techname?}
 
 To answer this question, we started from the full model of \techname, and removed each component in turn to understand its contribution to performance.

\noindent
\textbf{RQ3:} \textbf{What is the generalizability of \techname?}
 
 To answer this question, we fisrt conducted an experiment on 420 additional bugs from \textit{Defects4J v2.0}. To our best knowledge, this is the first APR approach that has been applied to this benchmark. We compared \techname with the previous two best-performing APR approaches for single-hunk bugs on Defects4J v1.2, namely TBar~\cite{liu2019tbar} and SimFix~\cite{2018Shaping}. In addition, we also applied \techname to other two benchmarks, \textit{QuixBugs} and \textit{IntroClassJava}, via \textit{RepairThemAll}~\cite{RepairThemAll2019} framework, which allows the execution of automatic program repair tools on benchmarks of bugs.
 
    
    


\subsection{Dataset}
The neural network model in our approach needs to be trained with a large number of history patches. To create this training set, we crawled Java projects created on GitHub~\cite{github} between March 2011 and March 2018, and downloaded 1,083,185 commits where the commit message contains at least one word from the following two groups, respectively: (1) \textit{fix, solve}; (2) \textit{bug, issue, problem, error}. Commits were filtered to include only patches that modify one single statement or insert one new statement, corresponding to two types of edits that our approach currently supports. To avoid data leak, we further discarded patches where (1) the project is a clone to Defects4J project or a program repair project using Defects4J, or (2) the method modified by the patch is the same as the method modified by any patch in Defects4J v1.2 or v2.0, based on AST comparison. There are 103,585 valid patches left after filtering, 
which are further split into two parts: 80\% for training and 20\% for validation.

We used four benchmarks to measure the performance of \techname. The first one contains 395 bugs from \textit{Defects4J} v1.2~\cite{defects4j}, which is a commonly used benchmark for automatic program repair research. The second one contains 420 additional bugs from \textit{Defects4J v2.0}~\cite{defects4j}. 
Defects4J v2.0 introduces 438 new bugs compared with Defects4J v1.2. However, GZoltar~\cite{GZoltar}, the fault localization approach used by our implementation as well as two baselines (TBar and SimFix), failed to finish on the project {\it Gson}, so we excluded 18 bugs in {\it Gson} from our benchmark. \modify{}{The third one contains 40 bugs from \textit{QuixBugs}~\cite{quixbugs}, which is a benchmark
with 40 buggy algorithmic programs specified by test cases. The last one, \textit{IntroClassJava}~\cite{durieux:hal-01272126}, consists of 297 buggy Java programs generated from the \textit{IntroClass}~\cite{LeGoues15tse} benchmark for C.}

\subsection{Fault Localization}
In our experiment, two settings for fault localization are used. In the first setting, the faulty location of a bug is unknown to APR tools, and they rely on existing fault localization approaches to localize the bug. \techname uses Ochiai~\cite{ochiai} (implemented in GZoltar~\cite{GZoltar}), which is widely used in existing APR tools~\cite{liu2019tbar, 2018Shaping}. In the second setting, the actual faulty location is given to APR tools. This is to measure the capability of patch generation without the influence of a specific fault localization tool, as suggested and adopted in previous studies~\citeN{coconut, codit, 9000077}.


\subsection{Baselines}
We selected existing APR approaches as the baselines for comparison. Since \techname generates only single-hunk patches (patches that only change a consecutive code fragment), we chose 10 traditional single-hunk APR approaches that are often used as baselines in existing studies: {jGenProg}~\cite{genprog}, {HDRepair}~\cite{le2016history}, {Nopol}~\cite{xuan2016nopol}, {CapGen}~\cite{capgen}, {SketchFix}~\cite{hua2018sketchfix}, {TBar}~\cite{liu2019tbar},
{FixMiner}~\cite{koyuncu2020fixminer}, {SimFix}~\cite{2018Shaping}, {PraPR}~\cite{prapr}, {AVATAR}~\cite{liu2019avatar}. In particular, TBar correctly repairs the highest number of bugs on Defects4J v1.2 as far as we know. We also selected DL-based APR approaches that adopt the encoder-decoder architecture to generate patches and have been evaluated on Defects4J as baselines. Four approaches have been chosen based on this criteria, namely, {SequenceR}~\cite{9000077}, {CODIT}~\cite{codit}, {DLFix}~\cite{dlfix}, and {CoCoNuT}~\cite{coconut}. 

For Defects4J v1.2, the performance data of the baselines are collected from existing papers~\cite{liu2019tbar, liu2020efficiency}.
For additional bugs from Defects4J v2.0, two best-performing single-hunk APR approaches on Defects4J v1.2, TBar and SimFix, are adapted and executed for comparison. \modify{}{For QuixBugs and IntroClassJava, we directly choosed the APR tools used in RepairThemAll~\cite{RepairThemAll2019} and DL-based APR tools which have experimented on these two benchmarks as baselines: jGenProg~\cite{genprog}, RSRepair~\cite{PatchPlausibility}, Nopol~\cite{xuan2016nopol}, and CoCoNuT~\cite{coconut}. We also directly used the result reported in the original papers~\cite{RepairThemAll2019, coconut}}.

\subsection{Correctness of Patches}
To check the correctness of the patches, we manually examined every patch if it is the same with or semantically equivalent to the patch provided by Defects4J, as in previous works~\cite{liu2019tbar, 2018Shaping, dlfix, coconut, prapr}. To reduce possible errors made in this process, every patch is examined by two of the authors individually and is considered correct only if both authors consider it correct. The kappa score of the experiment is 0.98.
Furthermore, we also publish all the patches generated by \techname for public judgment\footnote{The source code of \techname, generated patches, and an online demo are available at https://github.com/pkuzqh/Recoder}.

\subsection{Implementation Details}
Our approach is implemented based on PyTorch~\cite{pytorch}, with parameters set to $N_1 = 5, N_2 = 9, N_3 = 2$, i.e., the code reader contains a stack of 5 blocks, the AST reader contains a stack of 9 blocks, and the decoder contains a stack of 2 blocks, respectively. Embedding sizes for all embedding vectors are set to 256, and all hidden sizes are set following the configuration of TreeGen~\cite{treegen}. During training, dropout~\cite{dropout} is used to prevent overfitting, with the drop rate of 0.1. The model is optimized by Adam~\cite{adam} with learning rate 0.0001. These hyper-parameters and parameters for our model are chosen based on the performance on validation set. 

We set a 5-hour running-time limit for \techname, following existing studies~\cite{2018Shaping, dlfix, coconut,saha2019harnessing}. 

\begin{table*}
\caption{Comparison without Perfect Fault Localization}                            \vspace{-2mm}

\begin{threeparttable}
  \begin{tabular}{l|c|c|c|c|c|c|c|c|c|
  c|c|c}
    \toprule
    Project&jGenProg&HDRepair&Nopol&CapGen&SketchFix&FixMiner&SimFix&TBar&DLFix&PraPR&AVATAR&\techname\\
    \midrule
    Chart & 0/7 & 0/2 & 1/6 & 4/4 & 6/8&5/8&4/8&\textbf{9/14}&5/12&4/14&5/12&8/14\\
    Closure & 0/0 & 0/7 & 0/0 & 0/0 & 3/5 & 5/5 & 6/8 & 8/12 & 6/10 &12/62&8/12& \textbf{15/31}\\
    Lang & 0/0 & 2/6 & 3/7 & 5/5 & 3/4 & 2/3 & \textbf{9/13}&5/14&5/12& 3/19 &{5/11}&\textbf{9/15}\\
    Math & 5/18 & 4/7 & 1/21 & 12/16 & 7/8 & 12/14 & 14/26 & \textbf{18/36} & 12/28 &6/40&{6/13}& {15/30}\\
    Time & 0/2 & 0/1 & 0/1 & 0/0 & 0/1 & 1/1 & 1/1 & 1/3 & 1/2&0/7& {1/3}&\textbf{2/2}\\
    Mockito & 0/0 & 0/0 & 0/0 & 0/0& 0/0 & 0/0 & 0/0&1/2 & 1/1 &1/6&\textbf{2/2} & \textbf{2/2}\\
    \midrule
    Total & 5/27 & 6/23 & 5/35 & 21/25 & 19/26& 25/31 & 34/56 & 42/81 & 30/65&26/148&27/53 & \textbf{51/94}\\
    \midrule
    P(\%) & 18.5 & 26.1 & 14.3 & \textbf{84.0} & 73.1 & 80.6 & 60.7 & 51.9 & 46.2 &17.6&50.9& 54.3\\
  \bottomrule  
\end{tabular}
 \begin{tablenotes}
 \footnotesize
 \item In the cells, x/y:x denotes the number of correct patches, and y denotes the number of patches that can pass all the test cases.
 \end{tablenotes}
\end{threeparttable}
\label{tab:result}
\end{table*}

\begin{table}
\caption{Comparison with Perfect Fault Localization}\vspace{-2mm}
  \resizebox{\linewidth}{!}{
  \begin{tabular}{l|c|c|c|c|c|c}
    \toprule
    Project&SequenceR&CODIT&DLFix&CoCoNuT&TBar&\techname\\
    \midrule
    Chart & 3 & 4 & 5 & 7&\textbf{11} & {10}\\
    Closure & 3 & 3 & 11 & 9&17 & \textbf{23}\\
    Lang & 3 & 3 & 8 & 7&\textbf{13} & 10\\
    Math & 4 & 6 & 13 & 16&\textbf{22} & 18\\
    Time & 0 & 0 & 2 & 1& 2& \textbf{3} \\
    Mockito & 0 & 0 & 1& \textbf{4} & 3 & {2}\\
    \midrule
        Total & 13 & 16 & 40 & 44 & \textbf{68}&{66}\\
  \bottomrule  
\end{tabular}
}
\label{tab:result2}
\end{table}

\section{Experimental Results}
\subsection{Performance of \techname(RQ1)}
\subsubsection{Results without Perfect Fault Localization}
We first compare \techname with the baselines in the setting where no faulty location is given. Results as Table~\ref{tab:result} shown only include baselines that have been evaluated under this setting. As shown, \techname correctly repairs 51 bugs and outperforms all of the previous single-hunk APR techniques on Defects4J v1.2. In particular, \techname repairs 21.4\% (9 bugs) more bugs than the previous state-of-the-art APR tool for single-hunk bugs, TBar. Within our knowledge, \techname is the first DL-based APR approach that has outperformed the traditional APR approaches. 

We show a few example patches that are possibly generated with the help of the novel techniques in \techname.
As shown in Figure~\ref{fig:unique2}, Chart-8 is a bug that DLFix fails to fix. The correct patch only changes a parameter of the method invocation while DLFix needs to generate the whole expression. 
By contrast, \techname generates a {\it modify} operation that changes only one parameter.
Figure~\ref{fig:unique} shows a bug only repaired by \techname. This patch relies on a project-specific method, ``\texttt{isNoType}'', and thus cannot be generated by many of the existing approaches. However, \techname  
fixes it correctly by generating a placeholder and then instantiating it with ``\texttt{isNoType}''. 
\begin{figure}
    \centering
    \includegraphics[width=\linewidth]{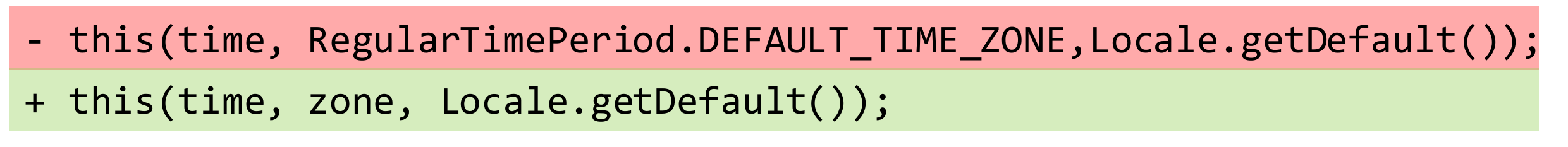}
                            \vspace{-7mm}
    \caption{Chart-8 - A bug fixed by \techname with Modify operation}
    \vspace{-2mm}
    \label{fig:unique2}
\end{figure}
\begin{figure}
  \centering
  \includegraphics[width=\linewidth]{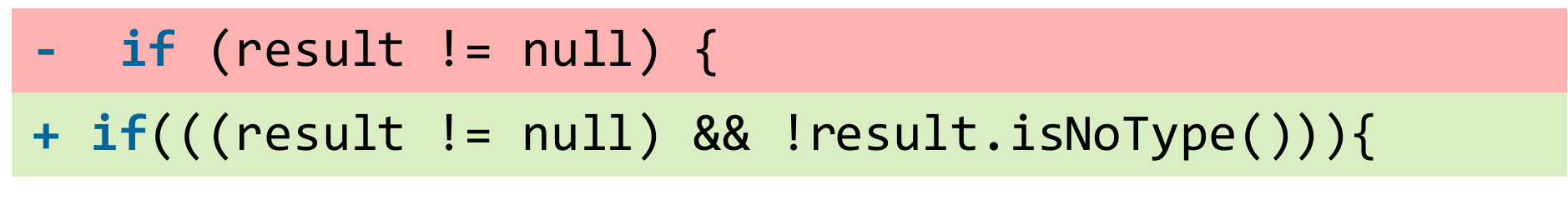}
                            \vspace{-7mm}      

  \caption{Closure-104 - A bug fixed by \techname with \textit{placeholder} generation}
    \vspace{-2mm}
  \label{fig:unique}
\end{figure}
\vspace{-2mm}
\subsubsection{Results with Perfect Fault Localization}
Table~\ref{tab:result2} shows the result where the actual faulty location is provided. 
As before, only baselines that have been evaluated under this setting are listed.
\techname still outperforms all of the existing APR approaches, including traditional ones. 
Also, compared with \techname using Ochiai for fault localization, this model achieves a 35.3\% improvement. The result implies that \techname can achieve better performance with better fault localization techniques.

\subsubsection{Degree of Complementary}
\begin{figure}
    \centering
    \includegraphics[width=\linewidth]{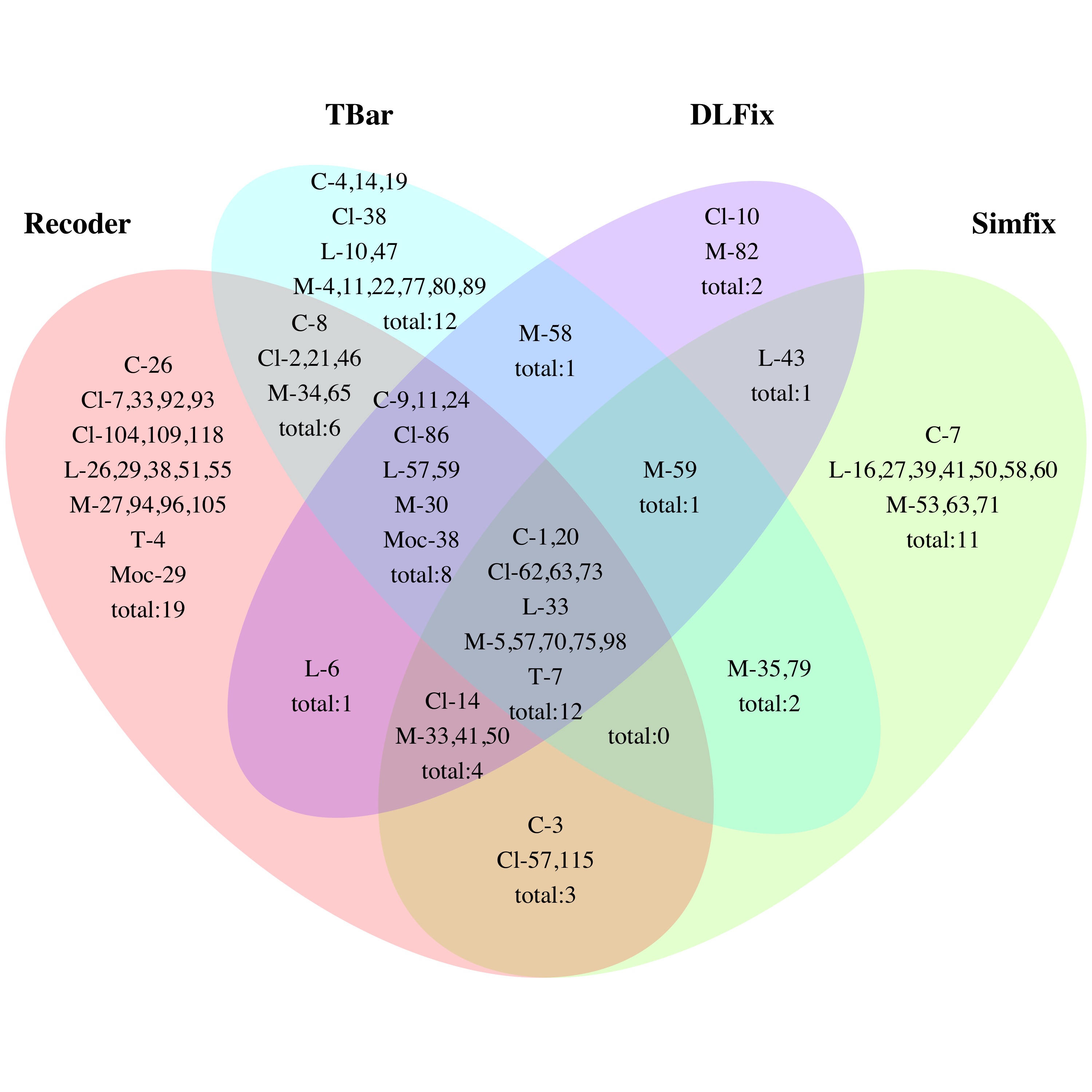}      
    {\footnotesize Project Names: C:Chart, CL:Closure, L:Lang, M:Math, Moc:Mockito, T:Time}                            
    \vspace{-3mm}
    \caption{Degree of Complementary.}
    \vspace{-6mm}
    \label{fig:overlap}
\end{figure}

We further investigate to what extent \techname complements the three best-performing existing approaches for fixing single-hunk bugs, TBar, SimFix, and DLFix. Figure~\ref{fig:overlap} reveals the overlaps of the bugs fixed by different approaches.
As shown, \techname fixes 19 unique bugs when compared with three baselines. Moreover, \techname fixes 34, 28, 27 unique bugs compared with SimFix, TBar, and DLFix, respectively. 
This result shows that \techname is complementary to these best-performing existing approaches for single-hunk bugs.
\tcbset{colframe = black}
\subsection{Contribution of Each Component (RQ2)}
To answer RQ2, we conducted an ablation test on Defects4J v1.2 to figure out the contribution of each component. Since the ablation test requires much time, we only conducted the experiment based on Ochiai Fault Localization scenario.

\begin{table}
\caption{Ablation Test for \techname on Defects4J v1.2}\vspace{-2mm}
\resizebox{\linewidth}{!}
{ 
  \begin{tabular}{l|c|c|c|c|c}
    \toprule
    Project&\tt -modify&\tt -subtreecopy&\tt -insert&\tt -placeholder&\techname\\
    \midrule
    Chart & 4 & 6 & 7 & 8 & 8\\
    Closure & 6 & 12 & 12 & 11 & 15\\
    Lang & 3 & 6 & 5 & 5 & 9\\
    Math & 7 & 8 & 9 & 9 & 15\\
    Time & 1 & 1 & 1 & 1 & 2 \\
    Mockito & 2 & 1 & 1 & 1 & 2\\
    \midrule
        Total & 23 & 34 & 35 & 35 & 51\\
  \bottomrule  
\end{tabular}
}
\label{tab:result3}
\end{table}

Table~\ref{tab:result3} shows the results of the ablation test. We respectively removed three edit operations, {\tt modify}, {\tt copy}, and {\tt insert}, as well as the generation of placeholders. As shown in the table, removing any of the components leads to a significant drop in performance. This result suggests that the two novel techniques proposed in \techname are the key to its performance.

\subsection{Generalizability of \techname (RQ3)}


\begin{table}
\vspace{-3mm}
\caption{Comparison on the 420 additional bugs}\vspace{-2mm}
\label{tab:rede2}
\resizebox{\linewidth}{!}{
  \begin{tabular}{l|c|c|c|c|c}
    \toprule
    Project& \# Used Bugs & Bug IDs & TBar & SimFix & \techname\\
    \midrule
    Cli & 39 & 1-5,7-40& 1/7 & 0/4 & \textbf{3/3}\\
    Clousre & 43 &134 - 176&0/5&\textbf{1/5}&0/7\\
    JacksonDatabind&112&1-112&0/0&0/0&0/0\\
    Codec &18 & 1-18& \textbf{2/6} & 0/2 & \textbf{2/2}\\
    Collections& 4 & 25-28& 0/1 &0/1 & 0/0 \\
    Compress & 47 & 1-47 & 1/13 &0/6 & \textbf{3/9}\\
    Csv& 16 & 1-16  & 1/5 &0/2  & \textbf{4/4} \\
    JacksonCore& 26 & 1-26 & 0/6 &0/0 & 0/4 \\
    Jsoup& 93 & 1-93 & 3/7 & 1/5 & \textbf{7/13}\\
    JxPath& 22 & 1-22 & 0/0 &0/0 & 0/4\\
    \midrule
        Total& 420 & - & 8/50 &2/25 & \textbf{19/46}\\
  \bottomrule  
\end{tabular}
}
\end{table}
\modify{Additional Experiment on RepairThemAll}{The results on Defects4J v2.0, QuixBugs and IntroClassJava are shown in Table~\ref{tab:rede2} and Table~\ref{tab:rede3}.} As shown, on Defects4J v2.0, all three approaches repair a smaller proportion of bugs, suggesting that the additional bugs on Defects4J v2.0 are probably more difficult to repair. Nevertheless, \techname still repairs most bugs compared with baselines, 19 in total, achieving 137.5\% (11 bugs) improvement over TBar and 850.0\% (17 bugs) improvement over SimFix. We believe that the considerable performance drops of TBar and SimFix are caused by their design: TBar is based on validated patterns on Defects4J v1.2, which may not generalize beyond the projects in Defects4J v1.2; SimFix relies on similar code snippets in the same project, but new projects in Defects4J v2.0 are much smaller, and thus the chance to find similar code snippets become smaller. On the other hand, \techname is trained from a large set of patches collected from different projects and is thus more likely to generalize to new projects. \modify{Additional Experiment on RepairThemAll}{On QuixBugs and IntroClassJava, \techname also repaired 775\% (31 bugs) and 30.8\% (4 bugs) more bugs on IntroClassJava and QuixBugs over the baselines respectively, further confirming the effectiveness and generalizability of \techname.}
\begin{table}
\caption{Comparison on IntroClassJava and QuixBugs}\vspace{-2mm}
\label{tab:rede3}
\resizebox{\linewidth}{!}{
  \begin{tabular}{l|c|c|c|c|c|c}
    \toprule
    Project& \# Used Bugs &jGenProg&RSRepair&Nopol&CoCoNuT & \techname\\
    \midrule
    IntroClassJava & 297 & 1/4 &4/22&3/32&- & \textbf{35/56}\\
    QuixBugs & 40 &0/3&2/4&1/4&13/20&\textbf{17/17}\\
    \midrule
        Total& 337 & 1/7 & 6/26 &4/36 &13/20& \textbf{52/73}\\
  \bottomrule  
\end{tabular}
}
\end{table}
\section{DISCUSSION}
\subsection{Adequacy of Dataset}

\begin{figure}
    \centering
    \includegraphics[width=\linewidth]{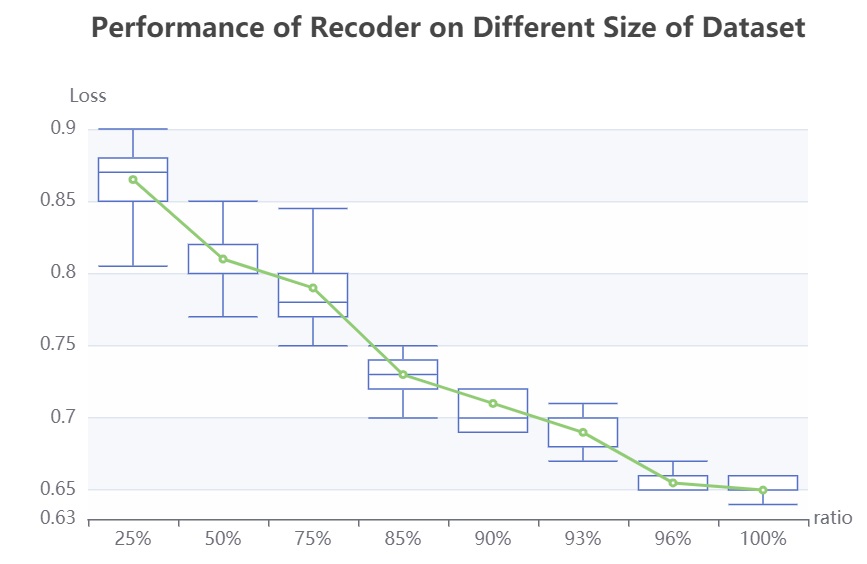}
    \vspace{-7mm}
    \caption{Adequacy of the Dataset.}
    \vspace{-2mm}
    \label{fig:tradeoff}
\end{figure}
\modify{tradeoff}{
To understand the adequacy of our training data, we trained \techname on differently sized subsets of the original training dataset and calculate the loss over the single-hunk bugs in the Defects4J v1.2 dataset. For each subset, we train 5 models with different random seeds and report the average performance of these models.
Figure~\ref{fig:tradeoff} shows the sensitivity analysis of dataset size for \techname. As shown, the loss and the diversity both decrease with the increase of training subset size. 
Therefore, the performance of \techname may further increase if more training data are provided.
}
\subsection{Limitations of \techname}
\modify{limitations of \techname}{
Recoder shares limitations common to most APR approaches: 1) Bugs to be fixed should be reproducible by failing test cases. 2) Effective fault localization is needed to identify the faulty statement. Recoder also shares the limitations common to DL-based approaches: The performance would degrade if the training set and the testing set have different distributions.}
\section{Related Work}
\paragraph{\textbf {DL-based APR Approaches}} APR has been approached using different techniques, such as heuristics or random search~\cite{genprog,6227211,10.1145/2931037.2948705,10.1145/2568225.2568254}, semantic analysis~\cite{liu2019avatar, mechtaev2015directfix,mechtaev2016angelix,mechtaev2018symbolic, gao2021beyond,chen2017contract,hua2018sketchfix,kaleeswaran2014minthint}, manually defined or automatically mined repair patterns~\cite{6606623,2018Shaping,koyuncu2020fixminer,long2017automatic,jiang2019inferring,ghanbari2019practical,bader2019getafix,rolim2017learning}, and learning from source code~\cite{2018Shaping,capgen,xiong2017precise,xiong2018learning,xin2017leveraging,long2016automatic}. For a thorough discussion of APR, existing surveys~\cite{8089448,monperrus:hal-01956501,goues2019automated,Monperrus2015} are recommended to readers.

A closely related series of work is APR based on deep learning. The mainstream approaches treat APR as a statistical machine translation that generates the fixed code with faulty code. DeepFix~\cite{10.5555/3298239.3298436} learns the syntax rules via a sequence-to-sequence model to fix syntax errors. \citet{6606623} and \citet{9000077} also adopt a sequence-to-sequence translation model to generate the patch. They use sequence-to-sequence NMT with a copy mechanism. \citet{codit} propose CODIT, which learns code edits by encoding code structures in an NMT model to generate patches. \citet{dlfix} propose a Tree-LSTM to encode the abstract syntax tree of the faulty method to generate the patches. CoCoNuT, as proposed by \citet{coconut}, adopts CNN to encode the faulty method and generates the patch token by token. Compared to them, our paper is the first work that aims to improve the decoder and employs a syntax-guided manner to generate edits, with the provider/decider architecture and placeholder generation. 

\paragraph{\textbf {Multi-Hunk APR}} Most of the above approaches generate single-hunk patches---patches that change one place of the program. Recently, \citet{saha2019harnessing} propose Hercules to repair multi-hunk bugs by discovering similar code snippets and applying similar changes. Since lifting single-hunk repair to multiple-hunk repair is a generic idea and can also be applied to \techname, we did not directly compare \techname with the multi-hunk repair tools in our evaluation. Nevertheless, though \techname only repairs single-hunk bugs, we notice that it still outperforms Hercules by repairing 5 more bugs on the Defects4J v1.0 (including all projects from v1.2 except for Mockito), the dataset Hercules has been evaluated on.

\paragraph{\textbf{DL-based Code Generation}} Code generation aims to generate code from a natural language specification and has been intensively studied during recent years. With the development of deep learning, \citet{ling2016latent} propose a neural machine translation model to generate the program token by token. Being aware that code has the constraints of grammar rules and is different from natural language, \citet{DBLP:conf/acl/YinN17} and \citet{DBLP:conf/acl/RabinovichSK17} propose to generate the AST of the program via expanding from a start node. To integrate the semantic of identifiers, OCoR~\cite{ocor} proposes to encode the identifiers at character level. To alleviate the long dependency problem, a CNN decoder~\cite{DBLP:conf/aaai/SunZMXLZ19} and TreeGen (a tree-based Transformer)~\cite{treegen} are proposed to generate the program. In this paper, we significantly extend TreeGen to generate the edit sequence for program repair.

\paragraph{\textbf{DL-based Code Edit Generation}}
Several existing DL-based approaches also use the idea of generating edits on programs~\cite{yasunaga2020graphbased,DBLP:journals/corr/abs-1911-01205,structuralChange,Dinella2020HOPPITY, cao2020}. \citet{DBLP:journals/corr/abs-1911-01205} view a program as a sequence of tokens and generate a sequence of token-editing commands. \citet{structuralChange} view a program as a tree and generate node-editing or subtree-editing commands. \citet{Dinella2020HOPPITY} view a program as a graph and generate node-editing commands. Compared with our approach, there are three major differences. First, the existing approaches are not syntax-guided and treat an edit script as a sequence of tokens. As a result, they may generate syntactically incorrect edit scripts and do not ensure syntactic correctness of the edited program. On the other hand, our approach introduces the provider/decider architecture and successfully realizes the syntax-guided generation for edits. Second, none of the existing approaches support placeholder generation, and thus are ineffective in generating edits with project-specific identifiers. Third, the editing commands they use are atomic and are inefficient in representing large changes. For example, to insert a variable declaration, in our approach there is one {\tt insert} operation: {\tt insert(int var = 0;)}. However, the existing approaches have to represent this change as a sequence of 5 insertions, where each insertion inserts one token. 


\section{THREATS TO VALIDITY}
\paragraph{\textbf{Threats to external validity}} mainly lie in the evaluation dataset we used. First, though our approach applies to different programming languages, so far, we have only implemented and evaluated it on Java, so future work is needed to understand its performance on other programming languages. Second, though we have evaluated on Defects4J v2.0, QuixBugs, and IntroClassJava, it is yet unknown how our approach generalizes to different datasets~\cite{JiangLNZH21}. This is a future work to be explored. 
\paragraph{\textbf{Threats to internal validity}} mainly lie in our manual assessment of patch correctness. To reduce this threat, two authors have independently checked the correctness of the patches, and a patch is considered correct only if both authors consider it correct. The generated patches also have been released for public assessment.

\section{CONCLUSION}
\modify{}{In this paper, we propose \techname, a syntax-guided edit decoder with placeholder generation for automated program repair. 
\techname uses a novel provider/decider architecture to ensure accurate generation and syntactic correctness of the edited program and generates placeholders for project-specific identifiers.
In the experiment, \techname achieved 21.4\% improvement (9 bugs) over the existing state-of-the-art APR approach for single-hunk bugs on Defects4J v1.2. Importantly, \techname is the first DL-based APR approach that has outperformed traditional APR techniques on this benchmark. Further evaluation on three other benchmarks shows that \techname has better generalizability than some state-of-the-art APR approaches.}

\section*{ACKNOWLEDGMENTS}

\modify{acknowledgment}{This work is sponsored by the National Key Research and Development Program of China under Grant No. 2017YFB1001803, National Natural Science Foundation of China under Grant Nos. 61922003, and a grant from ZTE-PKU Joint Laboratory for Foundation Software.}

\balance
\bibliographystyle{ACM-Reference-Format}
\bibliography{sample-base}

\appendix

\end{document}